\newtheorem{definition}{Definition}
\newtheorem{assumption}{Assumption}
\newtheorem{theorem}{Theorem}
\newtheorem{corollary}{Corollary}
\begin{document}

\title{GMM and M Estimation under Network Dependence\thanks{July 18, 2025. An earlier version of this paper was circulated as arXiv:2503.00290v1, titled ``Uniform Limit Theory for Network Data'' (March 1, 2025). It has since been superseded by the present paper.}}
\author{Yuya Sasaki\thanks{Brian and Charlotte Grove Chair and Professor of Economics. Department of Economics, Vanderbilt University, PMB \#351819, 2301 Vanderbilt Place, Nashville, TN 37235-1819. Email: yuya.sasaki@vanderbilt.edu. I thank Brian and Charlotte Grove for their generous research support.}\\Vanderbilt University.}
\date{}
\maketitle

\begin{abstract}
This paper presents GMM and M estimators and their asymptotic properties for network-dependent data. To this end, I build on \citet*{KMS2021} and develop a novel uniform law of large numbers (ULLN), which is essential to ensure desired asymptotic behaviors of nonlinear estimators \citep[e.g.,][Section 2]{NeweyMcFadden1994}. Using this ULLN, I establish the consistency and asymptotic normality of both GMM and M estimators. For practical convenience, complete estimation and inference procedures are also provided.

\bigskip\noindent
{\bf Keywords:} GMM estimation, M estimation, network dependence, uniform law of large numbers

\medskip\noindent
{\bf JEL Codes:} C12, C21, C31
\end{abstract}

\section{Introduction}\label{sec:intro}

In recent years, asymptotic analysis of network-dependent data has garnered significant attention in econometrics \citep*[e.g.,][]{kuersteiner2019limit,leung2019normal,kuersteiner2020dynamic,KMS2021}.\footnote{See also \citet{jenish2012spatial} for related work in which the dependence is embedded in Euclidean space. This set of references focuses on work related to weakly dependent structures, and thus excludes another important branch of the literature on network asymptotics, namely, the literature on exchangeable arrays, because its strong dependence structure differs substantially from the focus of the present paper, both in terms of network configuration and asymptotic theory. For the convenience of readers, however, I list some key theoretical contributions in this area: \citet{GRAHAM202023}, \citet{davezies2021inference}, \citet{menzel2021bootstrap}, \citet{chiang2023inference}, and \citet{graham2024kernel}, among others. A related notion of dependence has also been used for time series \citep[e.g.,][]{babii2022machine}.} In the most recent one of these, \citet*[KMS,][]{KMS2021} establish limit theorems and develop a robust variance estimator for a general class of dependent processes that encompass dependency-graph models in particular. Their framework, grounded in a conditional $\psi$-dependence concept adopted from \citet{doukhan1999new}, offers powerful tools for handling network data and has spurred further research in related fields. Furthermore, the theory and methods introduced by KMS have been widely applied in econometric studies of network models \citep[e.g.,][among others]{leung2022_dependence,gao_ding2023_network,hoshino_yanagi2024_noncompliance}.

Many applications, particularly those involving nonlinear models like limited dependent variable models, demand uniform convergence results. In the context of general classes of M estimators (including maximum likelihood estimators) and generalized method of moments (GMM) estimators, a uniform law of large numbers (ULLN) is crucial for ensuring that the empirical criterion function converges uniformly to its population counterpart. This uniform convergence is fundamental for establishing the consistency, and subsequently the asymptotic normality, of these estimators, as detailed in standard references such as the handbook chapter by \citet[][Section 2]{NeweyMcFadden1994}.

Although KMS offer elegant pointwise limit theorems under network dependence, their results do not directly yield the uniform law of large numbers (ULLN) required for nonlinear estimation. Achieving uniform convergence necessitates controlling not only the individual moments of network-dependent observations but also the fluctuations of the entire process uniformly across the parameter space. 

The main contribution of this paper is to bridge this gap by establishing a novel ULLN under network dependence. The results build on the framework of \citet*{KMS2021}, which leverages model restrictions based on conditional $\psi$-dependence and decay rates of network dependence, concepts that will be briefly reviewed in Section \ref{sec:baseline}. To extend pointwise convergence to uniform convergence, I impose additional regularity conditions including the uniform equicontinuity. The resulting ULLN is then applied to establish the consistency and asymptotic normality of the GMM and M estimators.

This paper was motivated by a practical question raised by a graduate student: ``Can the results of KMS be extended to nonlinear GMM estimation?'' In exploring this question, I identified a critical gap, namely, the lack of a ULLN in the KMS framework, as mentioned above. The purpose of this paper is to address that gap and help bridge the elegant theoretical work of KMS with practical applications in empirical research. It is important to emphasize that the developments presented here rely heavily on the foundational contributions of KMS. While this paper provides a step toward applying their theory to GMM and M estimation, I encourage readers using the results in the present paper to give primary credit to KMS for laying the essential groundwork.

The remainder of the paper is organized as follows. In Section \ref{sec:setup}, I introduce the setup. Section \ref{sec:main} presents the ULLN. Sections \ref{sec:m} and \ref{sec:gmm} introduce M and GMM estimators, respectively, and their asymptotic properties. These two sections also provide practical guidelines. Section \ref{sec:summary} concludes. Mathematical proofs of all the theoretical results are provided in the appendix.

\section{The Setup}\label{sec:setup}
This section introduces the econometric framework.

First, I introduce some basic notations. 
Let \( v, a \in \mathbb{N} \). For any function \( f: \mathbb{R}^{v \times a} \to \mathbb{R} \), define
\[
\|f\|_\infty = \sup_{x \in \mathbb{R}^{v \times a}} |f(x)|
\quad\text{and}\quad
\operatorname{Lip}(f) = \sup_{x \neq y} \frac{|f(x)-f(y)|}{d(x,y)},
\]
where \( d(x,y) \) is a metric on \( \mathbb{R}^{v \times a} \). With these definitions, we introduce the class of uniformly bounded Lipschitz functions:
\[
L_{v,a} = \Bigl\{ f : \mathbb{R}^{v \times a} \to \mathbb{R} \, : \, \|f\|_\infty < \infty \text{ and } \operatorname{Lip}(f) < \infty \Bigr\}.
\]

\subsection{Conditionally $\psi$-Dependent Processes}\label{sec:baseline}
This subsection provides a concise overview of the baseline model introduced in \citet*[KMS,][]{KMS2021} and the notational conventions used in the KMS framework; for a more detailed exposition, please refer to the original paper by KMS.

For each \(n \in \mathbb{N}\), let \(N_n = \{1,2,\ldots,n\}\) denote the set of indices corresponding to the nodes in the network \(G_n\) with the adjacency matrix \(A_n\) whose elements are $0$ and $1$. A link between nodes \(i\) and \(j\) exists if and only if the \((i,j)\)-th entry of \(A_n\) equals one. For each \(n \in \mathbb{N}\), let \(\mathcal{C}_n\) be the \(\sigma\)-algebra with respect to which the adjacency matrix \(A_n\) is measurable. Let \(d_n(i,j)\) denote the network distance between nodes \(i\) and \(j\) in \(N_n\), defined as the length of the shortest path connecting \(i\) and \(j\) in \(G_n\).

For \(a,b \in \mathbb{N}\) and a positive real number \(s\), define
\[
P_n(a,b;s) = \Bigl\{ (A,B) \, : \, A,B \subset N_n,\; |A| = a,\; |B| = b,\; \text{and}\; d_n(A,B) \ge s \Bigr\},
\]
where
\[
d_n(A,B) = \min\{ d_n(i,j) : i \in A,\; j \in B \}.
\]
Thus, each element of \(P_n(a,b;s)\) is a pair of node sets of sizes \(a\) and \(b\) with a distance of at least \(s\) between them.

Consider a triangular array \(\{Y_{n,i}\}_{i \in N_n}\) of random vectors in \(\mathbb{R}^v\). The following definition introduces the notion of conditional \(\psi\)-dependence as provided in KMS.

\begin{definition}[Conditional $\psi$-Dependence; KMS, Definition 2.2]\label{def:psi-dependence}${}$\\
A triangular array $\{Y_{n,i}\}_{i\in N_n}$ is \emph{conditionally $\psi$-dependent given} $\{\mathcal{C}_n\}$ if for each $n\in\mathbb{N}$, there exists a $\mathcal{C}_n$-measurable sequence 
\[
\vartheta_n = \{\vartheta_{n,s}\}_{s\ge 0} \quad \text{with } \vartheta_{n,0}=1,
\]
and a collection of nonrandom functions
\[
\psi_{a,b}: L_{v,a} \times L_{v,b} \to [0,\infty), \quad a,b\in\mathbb{N},
\]
such that for all positive integers $a,b$, for every pair $(A,B)\in P_n(a,b;s)$ with $s>0$, and for all functions $f\in L_{v,a}$ and $g\in L_{v,b}$, the following inequality holds almost surely:
\[
\Bigl|\operatorname{Cov}\bigl(f(Y_{n,A}),\,g(Y_{n,B}) \mid \mathcal{C}_n\bigr)\Bigr| \le \psi_{a,b}(f,g)\,\vartheta_{n,s}.
\]
\end{definition}

As emphasized in KMS, it is important to note that the decay coefficients are generally random, allowing one to accommodate the ``common shocks'' \(\mathcal{C}_n\) present in the network. I now present the following two key assumptions from KMS, which will be employed throughout the present paper.

\begin{assumption}[KMS, Assumption 2.1 (a)]\label{a:kms21}
The triangular array $\{Y_{n,i}\}$ is conditionally $\psi$--dependent given $\{\mathcal{C}_n\}$ with dependence coefficients $\{\vartheta_{n,s}\}$, and there exists a constant $C>0$ such that for all $a,b \in \mathbb{N}$, $f\in L_{v,a}$, and $g\in L_{v,b}$,
    \[
    \psi_{a,b}(f,g) \le C\, ab\, \Bigl(\|f\|_\infty + \operatorname{Lip}(f)\Bigr) \Bigl(\|g\|_\infty + \operatorname{Lip}(g)\Bigr).
    \]
\end{assumption}

For each node $i\in N_n$ in the network for each row $n$ and $s\ge 1$, define
    \[
    N_{n}(i;s) = \{ j\in N_n : d_n(i,j) \leq s\} \quad\text{and}\quad
    N_{n}^\partial(i;s) = \{ j\in N_n : d_n(i,j)=s\},
    \]
representing the number of nodes within and at a distance $s$, respectively.
Then, define the average shell size
    \[
    \delta_{n}^\partial(s) = \frac{1}{n}\sum_{i\in N_n} |N_{n}^\partial(i;s)|.
    \]
With this notation, the following assumption restricts the denseness of the network and the decay rate of dependence with the network distance.

\begin{assumption}[KMS, Assumption 3.2]\label{a:kms32}
The combined effect of network denseness and the decay of dependence is controlled so that
\[
\frac{1}{n}\sum_{s\geq 1} \delta_{ n}^\partial(s)\,\vartheta_{n,s} \to 0 \quad \text{a.s.}
\]
\end{assumption}

I refer readers to the original paper by KMS for detailed discussions of these assumptions, as they are excerpted from KMS. Under these assumptions, along with additional moment and regularity conditions, KMS establish the \textit{pointwise} law of large numbers -- see Proposition~3.1 in their paper.

\subsection{Function Classes}\label{sec:family}
This subsection introduces a parameter-indexed class of functions and imposes additional restrictions to establish the uniform law of large numbers.

Let $\Theta\subset\mathbb{R}^d$ denote a parameter space.
For each $\theta\in\Theta$, let 
    \[
    f(\cdot,\theta): \mathbb{R}^v \to \mathbb{R}
    \]
be a measurable function.
I impose the following conditions on the parameter space $\Theta$ and the function class $\{f(\cdot,\theta):\theta\in\Theta\}$.

\begin{assumption}[Compactness]\label{a:compact}
The parameter space $\Theta\subset\mathbb{R}^d$ is compact.
\end{assumption}

For $p>0$, let $\|f(Y_{n,i},\theta)\|_{\mathcal{C}_n,p}$ denote the conditional $L^p$ norm defined by 
\[
    \|f(Y_{n,i},\theta)\|_{\mathcal{C}_n,p} = \bigl(E\bigl(|f(Y_{n,i},\theta)|^p \mid \mathcal{C}_n\bigr)\bigr)^{1/p}.
\]
With this notation, the following assumption imposes conditions on the function class.

\begin{assumption}[Function Class]\label{a:function}
For each fixed $\theta\in\Theta$:
(i) there exists $\varepsilon > 0$ such that $\sup_{n \in \mathbb{N}}\max_{i\in N_n}\|f(Y_{n,i},\theta)\|_{\mathcal{C}_n,1+\varepsilon} < \infty$ a.s.; and
(ii) $f(\cdot,\theta) \in L_{v,1}$.
\end{assumption}

\noindent
Assumption \ref{a:function} (i) is the bounded moment condition required by Assumption 3.1 of KMS with $f(Y_{n,i},\theta)$ treated as an observation in place of $Y_{n,i}$.
Besides, Assumption \ref{a:function} (ii) imposes the uniform bound and Lipschitz conditions on each function $f(\cdot,\theta)$ in the class.
Taken together, these two components impose restrictions on the behavior of \( f(Y_{n,i}; \theta) \) for `each' \( \theta \), without placing any constraint on the effects of \( \theta \) on it.

For `each' \(\theta \in \Theta\), the pointwise law of large numbers, as stated in Proposition~3.1 of KMS, holds under Assumptions \ref{a:kms21}, \ref{a:kms32}, and \ref{a:function}. I will leverage this pointwise result by KMS as an auxiliary step in establishing the uniform law of large numbers, which requires the following uniform equicontinuity condition in addition.

\begin{assumption}[Uniform Equicontinuity]\label{a:equi}
The function class $\{f(\cdot,\theta): \theta\in\Theta\}$ is uniformly equicontinuous in $\theta$. In particular, there exists a constant $\overline L>0$ such that for all $y$ in the support of $Y_{n,i}$ for all $n$ and for all $\theta,\theta'\in\Theta$, 
\[
|f(y,\theta)-f(y,\theta')| \le \overline L \|\theta-\theta'\|.
\]
\end{assumption}

Assumption \ref{a:equi}, together with Assumption \ref{a:compact}, will allow me to have a finite-net approximation of $f(Y_{n,i},\theta)$ for all $\theta \in \Theta$, as a way to establish the uniform result.

\section{Uniform Law of Large Numbers}\label{sec:main}
I now state the uniform law of large numbers for network-dependent data.

\begin{theorem}[Uniform Law of Large Numbers]\label{theorem:main}
If Assumptions \ref{a:kms21}--\ref{a:equi} are satisfied, then
\[
E\left[\left. \sup_{\theta\in\Theta} \left|\frac{1}{n}\sum_{i\in N_n}\Bigl[f(Y_{n,i},\theta) - E\bigl(f(Y_{n,i},\theta)\mid \mathcal{C}_n\bigr)\Bigr]\right| \ \right\vert \mathcal{C}_n \right] \to 0 \quad\text{a.s.}
\]
\end{theorem}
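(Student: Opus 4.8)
The plan is to establish uniform convergence via the standard chaining-to-a-finite-net argument, using the pointwise LLN of KMS (available for each fixed $\theta$ under Assumptions \ref{a:kms21}--\ref{a:function}) as the pointwise ingredient and the Lipschitz/compactness structure to upgrade from pointwise to uniform. Write $g_n(\theta) = \tfrac{1}{n}\sum_{i\in N_n}\bigl[f(Y_{n,i},\theta) - E(f(Y_{n,i},\theta)\mid \mathcal{C}_n)\bigr]$. First I would exploit the uniform equicontinuity (Assumption \ref{a:equi}) to show that $\theta \mapsto g_n(\theta)$ is itself Lipschitz in $\theta$ with a constant that does not depend on $n$: indeed, for any $\theta,\theta'$, the triangle inequality and Assumption \ref{a:equi} give $|f(Y_{n,i},\theta)-f(Y_{n,i},\theta')|\le \overline L\|\theta-\theta'\|$ pointwise, and the same bound survives taking conditional expectations (by Jensen) and then averaging, so $|g_n(\theta)-g_n(\theta')| \le 2\overline L\|\theta-\theta'\|$ almost surely, uniformly in $n$.

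Next I would use compactness of $\Theta$ (Assumption \ref{a:compact}) to cover it, for any $\eta>0$, by finitely many balls of radius $\eta/(4\overline L)$ centered at points $\theta_1,\dots,\theta_{K(\eta)}$, where $K(\eta)$ is finite and independent of $n$. The equi-Lipschitz bound then controls the oscillation of $g_n$ within each ball: for $\theta$ in the ball around $\theta_k$ we have $|g_n(\theta)-g_n(\theta_k)|\le 2\overline L\cdot \eta/(4\overline L) = \eta/2$. Hence
\[
\sup_{\theta\in\Theta}|g_n(\theta)| \le \max_{1\le k\le K(\eta)} |g_n(\theta_k)| + \frac{\eta}{2}.
\]
Because each $f(\cdot,\theta_k)\in L_{v,1}$ (Assumption \ref{a:function}), the pointwise LLN of KMS (Proposition 3.1, valid under Assumptions \ref{a:kms21}--\ref{a:kms32} together with \ref{a:function}) applies at each of the finitely many fixed points $\theta_k$, giving $g_n(\theta_k)\to 0$ a.s.\ for each $k$. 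A finite maximum of a.s.-null sequences is a.s.\ null, so on a probability-one event $\max_{1\le k\le K(\eta)}|g_n(\theta_k)|\to 0$, whence $\limsup_n \sup_{\theta\in\Theta}|g_n(\theta)| \le \eta/2$ almost surely.

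Finally I would remove the dependence on $\eta$: taking $\eta = 1/m$ over $m\in\mathbb{N}$ yields a countable intersection of probability-one events, on which $\limsup_n \sup_{\theta\in\Theta}|g_n(\theta)| \le 1/(2m)$ for every $m$, forcing the $\limsup$ to be zero almost surely and delivering the claim. The main obstacle I anticipate is not the chaining logic, which is routine, but rather verifying that the equi-Lipschitz constant genuinely transfers to $g_n$ uniformly over $n$ and over the realization of the common shocks $\mathcal{C}_n$: the bound must hold almost surely with a constant $\overline L$ that is truly nonrandom and $n$-free, which is exactly what Assumption \ref{a:equi} is designed to provide (note its phrasing over the support of $Y_{n,i}$ for all $n$). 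A secondary subtlety is bookkeeping the almost-sure exceptional null sets so that a single probability-one event supports the full uniform statement; this is handled cleanly by the countable-intersection device above, since $K(\eta)$ stays finite and independent of $n$.
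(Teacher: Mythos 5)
Your proposal is correct and follows essentially the same route as the paper's own proof: a finite $\delta$-net from compactness, the equi-Lipschitz bound (Assumption \ref{a:equi}) to control the oscillation of the centered average within each ball (which is exactly the paper's $A_n(\theta)+C_n(\theta)$ decomposition, including the Jensen step for the conditional-expectation part), the KMS pointwise LLN at the finitely many net points, and then letting the net radius shrink. Your explicit countable-intersection bookkeeping over $\eta = 1/m$ is just a slightly more careful rendering of the paper's ``since $\delta>0$ is arbitrary'' conclusion, not a different argument.
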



The next two sections demonstrate how this result can be applied to establish the consistency and asymptotic normality of GMM and M estimators. From this point onward, I focus on the case of a trivial sigma-field $\mathcal{C}_n$ and omit conditioning on it, following the convention in the existing literature \citep[e.g.,][among others]{leung2022_dependence,gao_ding2023_network,hoshino_yanagi2024_noncompliance}, which actually applies the large-sample theory developed by KMS.

To proceed, I introduce few additional notations.
Following KMS (Section 3.1), define
$$
c_n(s,m;k) = \inf_{\alpha>1}[\Delta_n(s,m;k\alpha)]^{1/\alpha}[\delta_n^\partial(s;\alpha/(\alpha-1))]^{1-1/\alpha}
$$
to control the network dependence at distance $s$,
where
\begin{align*}
\Delta_n(s,m;k) &= \frac{1}{n} \sum_{i \in N_n} \max_{j \in N_n^\partial(i;s)} |N_n(i;m) \backslash N_n(j;s-1)|^k
\quad\text{and}\\
\delta_n^\partial(s;k) &= \frac{1}{n} \sum_{i \in N_n} |N_n^\partial(i;s)|^k.
\end{align*}
Recall that $N_n(i;s)$ and $N_n^\partial(i;s)$ are defined in Section \ref{sec:baseline}.
I refer readers to KMS (Section 3.1) for discussions of these objects and the roles which they play.
Finally, let $\lambda_{\min}(A)$ denote the minimum eigenvalue of square matrix $A$.

\section{M Estimation}\label{sec:m}

Let $Q(\cdot)$ and $Q_n(\cdot)$ be the population and sample criterion functions for M estimation, defined on $\Theta$ by
$$
Q(\theta) = E\bigl(f(Y_{n,i},\theta)\bigr)
\quad\text{and}\quad
Q_n(\theta) = \frac{1}{n}\sum_{i \in N_n} f(Y_{n,i},\theta),
$$
respectively.\footnote{\label{foot:q_n}I consider the case where the population criterion is independent of $n$, but a slight modification of the assumptions can accommodate settings where the population criterion depends on $n$.}
The M estimator is defined as
\[
\hat{\theta}_{M} \in \arg\max_{\theta\in\Theta} Q_n(\theta).
\]
In the pesudo maximum likelihood estimation (PMLE) framework, $f(Y_{n,i},\theta)$ corresponds to the logarithm of the marginal density function of $Y_{i,n}$ given the parameter $\theta$.

\subsection{Consistency of the M Estimator}\label{sec:m:consistency}

Suppose that the population criterion satisfies the following condition.

\begin{assumption}[Identification for M Estimation]\label{ass:m:identification}
The objective function $Q(\cdot)$ is continuous on $\Theta$ and there exists a unique \(\theta_0 \in \text{int}(\Theta)\) such that
$
\{\theta_0\} = \arg\max_{\theta \in \Theta} Q(\theta).
$
\end{assumption}

With this identification condition, the standard argument based on \citet[][Theorem~2.1]{NeweyMcFadden1994}, for example, yields the consistency $\hat\theta_{M} \stackrel{p}{\rightarrow} \theta_0$ by the uniform law of large numbers (my Theorem \ref{theorem:main}).
Let me state this conclusion formally as the following corollary to Theorem \ref{theorem:main}.

\begin{corollary}[Consistency of the M Estimator]\label{cor:m}
If Assumptions \ref{a:kms21}--\ref{ass:m:identification} hold, then $\hat\theta_{M} \stackrel{p}{\rightarrow} \theta_0$.
\end{corollary}


\subsection{Asymptotic Normality of the M Estimator}\label{sec:m:normality}

To establish the asymptotic normality, the following two assumptions are used in addition.

\begin{assumption}\label{a:m:normal}
Let $\Sigma_n = \text{Var}\left(\sum_{i \in N_n} \nabla_{\theta} f(Y_{n,i},\theta_0)\right)$ for each $n \in \mathbb{N}$.
There exists some $p > 4$ such that:
(i) $\sup_{n \in \mathbb{N}}\max_{i \in N_n} E[\|\nabla_{\theta}f(Y_{n,i},\theta_0)\|^p]<\infty$;
(ii) $\sup_{n \geq 1} \max_{s \geq 1} \vartheta_{n,s} < \infty$ for each $k=1,2$;
(iii) $\frac{n}{\lambda_{\min}(\Sigma_n)^{2+k}} \sum_{s \geq 0} c_n(s,m_n;k)\vartheta_{n,s}^{1-(2+k)/p} \rightarrow 0$ and
$\frac{n^2 \vartheta_{n,m_n}^{1-1/p}}{\lambda_{\min}(\Sigma_n)} \rightarrow 0$; and
(iv) $n^{-2}\Sigma_n \rightarrow \Sigma$.
\end{assumption}

This assumption is invoked to directly obtain the CLT of KMS (their Theorem 3.2) for establishing the asymptotic normality of $\sqrt{n} a^\top \nabla_\theta Q_n(\theta_0)$ for a vector $a$ such that $\|a\|=1$.\footnote{While KMS consider multivariate random variables, their CLT result is stated for univariate cases.}
With our focus on the trivial sigma-field $\mathcal{C}_n$,
part (i) of Assumption \ref{a:m:normal} implies Assumption 3.3 of KMS,
part (ii) implies Assumption 2.1 (b) of KMS, and
part (iii) implies Assumption 3.4 of KMS by Rayleigh quotient.
Part (iv) requires that the vairance of the mean score in the griangular array converges.
We refer readers to KMS for discussions of these conditions.

\begin{assumption}\label{a:m:hessian}
(i) $\theta \mapsto f(y,\theta)$ is twice differentiable for all $y$ on an open set containing the support of $Y_{n,i}$. Let Assumptions \ref{a:function}--\ref{a:equi} be satisfied with each element of $\nabla_{\theta\theta}f(\cdot,\theta)$ in place of $f(\cdot,\theta)$.
(ii) There exists a function $h: \mathbb{R}^v \rightarrow \mathbb{R}_+$ such that $\sup_{n \in \mathbb{N}}\max_{i \in N_n} E[h(Y_{n,i})] < \infty$ and $h$ dominates $y \mapsto \nabla_{\theta\theta} f(y,\theta)$ for all $\theta \in \Theta$.
(iii) $H := \nabla_{\theta\theta} Q(\theta_0)$ is non-singular
\end{assumption}

Parts (i)--(ii) of Assumption \ref{a:m:hessian}, together with Assumptions \ref{a:kms21}, \ref{a:kms32}, and  \ref{a:compact}, are used to invoke the uniform law of large numbers (Theorem \ref{theorem:main}) on the Hessian: $\sup_{\theta \in \Theta} |\nabla_{\theta\theta} Q_n(\theta) - \nabla_{\theta\theta} Q(\theta)| \rightarrow 0$ a.s., where the equicontinuity in part (i) and the $L^1$ dominance in part (ii) allow the dominated convergence theorem to yield $\nabla_{\theta\theta} Q(\theta) = E\bigl(\nabla_{\theta\theta} f(Y_{n,i},\theta)\bigr)$, which is guaranteed to be a continuous function of $\theta$.
Further, part (iii) ensures that its limit is invertible at $\theta_0$.

Now, combining the CLT of KMS (their Theorem 3.2) with my Theorem \ref{theorem:main} and Corollary \ref{cor:m}, we obtain the following asymptotic normality result through checking the conditions of \citet[][Theorem 3.1]{NeweyMcFadden1994}.

\begin{corollary}[Asymptotic Normality of the M Estimator]\label{cor:m:normal}
If Assumptions \ref{a:kms21}--\ref{a:m:hessian} hold, then $\sqrt{n}\left(\hat\theta_M - \theta_0\right) \stackrel{d}{\rightarrow} N\left(0,H^{-1} \Sigma H^{-1}\right)$.
\end{corollary}


\subsection{Guide in Practice for M Estimation}\label{sec:m:practical}

The current section presents the practical procedure to implement an M estimation under network dependence.

First, obtain the estimate
\[
\hat{\theta}_{M} \in \arg\max_{\theta\in\Theta} \frac{1}{n}\sum_{i \in N_n} f(Y_{n,i},\theta).
\]

Second, adapting the network HAC estimation procedure of KMS (Section 4) to the present framework of M estimation, compute the network-robust variance estimate
\[
\hat \Sigma = \sum_{s \geq 0} \omega(s/b_n) \cdot \frac{1}{n} \sum_{i \in N_n} \sum_{j \in N_n^\partial(i;s)} \left( \nabla_\theta f(Y_{n,i},\hat\theta) \right) \left( \nabla_\theta f(Y_{n,i},\hat\theta) \right)^\top
\]
for the score,
where $\omega(\cdot)$ denotes a kernel function\footnote{The kernel $\omega: \mathbb{R} \rightarrow [-1,1]$ satisfies $\omega(0) = 1$, $\omega(z) = 0$ for $|z| > 1$, and $\omega(z) = \omega(-z)$ for all $z \in \mathbb{R}$.} and $b_n$ is a bandwidth parameter.

For example, using the Parzen kernel,
\[
\omega(u) = 
\begin{cases}
1 - 6u^2 + 6|u|^3, & \text{if } 0 \le |u| \le \frac{1}{2} \\
2(1 - |u|)^3, & \text{if } \frac{1}{2} < |u| \le 1 \\
0, & \text{if } |u| > 1
\end{cases}
\]
KMS demonstrate that the following bandwidth choice performs well in simulations:\footnote{That said, the optimal choice of bandwidth should remain an important direction for future research.}
\[
b_n = \frac{2\log n}{\log\left( \max\{\hat\delta_n^\partial(1), 1.05\} \right)},
\]
where $\hat\delta_n^\partial(1)$ denotes the average degree of the observed network.

Finally, compute the Hessian estimator
\[
\hat H = \frac{1}{n} \sum_{i \in N_n} \nabla_{\theta\theta} f(Y_{n,i},\hat\theta).
\]
Note that even in the PMLE framework, the information equality (which is established under i.i.d. sampling) may not hold in general under network dependence.

\section{GMM Estimation}\label{sec:gmm}

Let $f(\cdot,\cdot)$ denote the moment function such that the true parameter vector $\theta_0 \in \Theta$ satisfies the moment equality
$$
E\bigl(f(Y_{n,i},\theta_0)\bigr) = 0.
$$
Define the sample moment function by
\[
\bar{f}_n(\theta) = \frac{1}{n}\sum_{i=1}^n f(Y_{n,j}, \theta).
\]
For any sequence \(W_n\) of positive definite weighting matrices (which may depend on the data) converging in probability to a positive definite matrix $W$, the GMM estimator is defined as
\begin{align*}
&\hat{\theta}_{GMM} = \arg\min_{\theta\in\Theta} Q_n(\theta),
\\
&\text{where} \quad Q_n(\theta) = \bar{f}_n(\theta)^\top W_n\, \bar{f}_n(\theta).
\end{align*}
We can define the population criterion by\footnote{A similar remark to Footnote~\ref{foot:q_n} applies here.}
\[
Q(\theta) = E\bigl(f(Y_{n,i},\theta)\bigr)^\top W E\bigl(f(Y_{n,i},\theta)\bigr).
\]

\subsection{Consistency of the GMM Estimator}\label{sec:gmm:consistency}

Suppose that the population moment satisfies the following condition.

\begin{assumption}[Identification for GMM Estimation]\label{ass:gmm:identification}
The objective function $Q(\cdot)$ is continuous on $\Theta$ and there exists a unique \(\theta_0\in\text{int}(\Theta)\) such that
\[
E\bigl(f(Y_{n,i},\theta)\bigr)=0 \quad \text{if and only if} \quad \theta=\theta_0.
\]
\end{assumption}

With this identification condition, the standard argument based on \citet[][Theorem~2.1]{NeweyMcFadden1994}, for example, yields the consistency $\hat\theta_{GMM} \stackrel{p}{\rightarrow} \theta_0$ by the uniform law of large numbers (my Theorem \ref{theorem:main}).
Let me state this conclusion formally as the following corollary to Theorem \ref{theorem:main}.

\begin{corollary}[Consistency of the GMM Estimator]\label{cor:gmm}
If Assumptions \ref{a:kms21}--\ref{a:equi}, and \ref{ass:gmm:identification} hold, then $\hat\theta_{GMM} \stackrel{p}{\rightarrow} \theta_0$.
\end{corollary}


\subsection{Asymptotic Normality of the GMM Estimator}\label{sec:gmm:normality}

To establish the asymptotic normality, the following two assumptions are used in addition.

\begin{assumption}\label{a:gmm:normal}
Let $\Omega_n = \text{Var}\left(\sum_{i \in N_n} f(Y_{n,i},\theta_0)\right)$ for each $n \in \mathbb{N}$.
There exists some $p > 4$ such that:
(i) $\sup_{n \in \mathbb{N}}\max_{i \in N_n} E[\|f(Y_{n,i},\theta_0)\|^p]<\infty$;
(ii) $\sup_{n \geq 1} \max_{s \geq 1} \vartheta_{n,s} < \infty$ for each $k=1,2$;
(iii) $\frac{n}{\lambda_{\min}(\Omega_n)^{2+k}} \sum_{s \geq 0} c_n(s,m_n;k)\vartheta_{n,s}^{1-(2+k)/p} \rightarrow 0$ and
$\frac{n^2 \vartheta_{n,m_n}^{1-1/p}}{\lambda_{\min}(\Omega_n)} \rightarrow 0$; and
(iv) $n^{-2}\Omega_n \rightarrow \Omega$.
\end{assumption}

\begin{assumption}\label{a:gmm:hessian}
(i) $\theta \mapsto f(y,\theta)$ is differentiable for all $y$ on an open set containing the support of $Y_{n,i}$. Let Assumptions \ref{a:function}--\ref{a:equi} be satisfied with each element of $D_{\theta}f(\cdot,\theta)$ in place of $f(\cdot,\theta)$.
(ii) There exists a function $g: \mathbb{R}^v \rightarrow \mathbb{R}_+$ such that $\sup_{n \in \mathbb{N}}\max_{i \in N_n} E[g(Y_{n,i})] < \infty$ and $g$ dominates $y \mapsto D_{\theta} f(y,\theta)$ for all $\theta \in \Theta$.
(iii) $G := D_{\theta} E\left(f(Y_{n,i},\theta_0)\right)$ is non-singular
\end{assumption}

Assumptions \ref{a:gmm:normal} and \ref{a:gmm:hessian} are analogous to Assumptions \ref{a:m:normal} and \ref{a:m:hessian}, respectively, and hence similar discussions apply, which are omitted here to avoid repetitions.

Combining the CLT of KMS (their Theorem 3.2) with my Theorem \ref{theorem:main} and Corollary \ref{cor:gmm}, we obtain the following asymptotic normality result through checking the conditions of \citet[][Theorem 3.2]{NeweyMcFadden1994}.

\begin{corollary}[Asymptotic Normality of the GMM Estimator]\label{cor:gmm:normal}
If Assumptions \ref{a:kms21}--\ref{a:equi} and \ref{ass:gmm:identification}--\ref{a:gmm:hessian} hold, then 
$
\sqrt{n}\left(\hat\theta_{GMM} - \theta_0\right)
\stackrel{d}{\rightarrow}
N\left(0,(G^\top W G)^{-1} G^\top W \Omega W G (G^\top W G)^{-1}\right).
$
\end{corollary}


\subsection{Guide in Practice for GMM Estimation}\label{sec:gmm:practical}

The current section presents the practical procedure to implement an GMM estimation under network dependence.

First, obtain the estimate
\begin{align*}
&\hat{\theta}_{GMM} = \arg\min_{\theta\in\Theta} \left(\frac{1}{n}\sum_{i=1}^n f(Y_{n,j}, \theta)\right)^\top W_n\, \left(\frac{1}{n}\sum_{i=1}^n f(Y_{n,j}, \theta)\right).
\end{align*}
Second, adapting the network HAC estimation procedure of KMS (Section 4) to the present framework of GMM estimation, compute the network-robust variance estimate
\[
\hat \Omega = \sum_{s \geq 0} \omega(s/b_n) \cdot \frac{1}{n} \sum_{i \in N_n} \sum_{j \in N_n^\partial(i;s)} f(Y_{n,i},\hat\theta) f(Y_{n,i},\hat\theta)^\top,
\]
where $\omega(\cdot)$ is a kernel function and $b_n$ is a bandwidth parameter.
See Section \ref{sec:m:practical} for further discussions of $\omega(\cdot)$ and $b_n$.

Finally, compute the gradient estimator
\[
\hat G = \frac{1}{n} \sum_{i \in N_n} D_\theta f(Y_{n,i},\hat\theta).
\]
As usual, one may iterate the above procedure to implement the two-step GMM estimation.

\section{Summary and Discussions}\label{sec:summary}

This paper establishes the asymptotic properties of GMM and M estimators under network dependence. As a key step toward this goal, I extend the law of large numbers from \citet*[Proposition 3.1]{KMS2021} to a novel uniform law of large numbers (ULLN), stated in Theorem \ref{theorem:main}. Since the consistency of nonlinear estimators, such as GMM and M estimators, requires uniform convergence of the criterion functions, this result lays the foundation for proving their consistency and, subsequently, their asymptotic normality. For completeness, Sections \ref{sec:m} and \ref{sec:gmm} present full sets of assumptions under which these asymptotic properties hold for the M and GMM estimators, respectively.

As already mentioned in the introduction, this paper originated from a practical question posed by a graduate student: ``Can the results of KMS be applied to nonlinear GMM estimation?'' In addressing this question, I identified a key gap, namely, the absence of a ULLN in KMS as discussed above. This paper was written to bridge that gap and connect the elegant theory of KMS with the needs of empirical practitioners. That said, the results presented here build heavily on KMS, and much of the foundational work and theoretical development should be credited to their contribution. Accordingly, even if readers use the results presented in this paper in the context of GMM and M estimation, I strongly encourage them to give primary credit to KMS, whose work has done most of the heavy lifting.

\bibliographystyle{chicago} 
\bibliography{mybib}

\begin{thebibliography}{}

\bibitem[\protect\citeauthoryear{Babii, Ghysels, and Striaukas}{Babii
  et~al.}{2022}]{babii2022machine}
Babii, A., E.~Ghysels, and J.~Striaukas (2022).
\newblock Machine learning time series regressions with an application to
  nowcasting.
\newblock {\em Journal of Business \& Economic Statistics\/}~{\em 40\/}(3),
  1094--1106.

\bibitem[\protect\citeauthoryear{Chiang, Kato, and Sasaki}{Chiang
  et~al.}{2023}]{chiang2023inference}
Chiang, H.~D., K.~Kato, and Y.~Sasaki (2023).
\newblock Inference for high-dimensional exchangeable arrays.
\newblock {\em Journal of the American Statistical Association\/}~{\em
  118\/}(543), 1595--1605.

\bibitem[\protect\citeauthoryear{Davezies, D’Haultf{\oe}uille, and
  Guyonvarch}{Davezies et~al.}{2021}]{davezies2021inference}
Davezies, L., X.~D’Haultf{\oe}uille, and Y.~Guyonvarch (2021).
\newblock Inference for exchangeable arrays with application to random effects
  models.
\newblock {\em Annals of Statistics\/}~{\em 49\/}(2), 1086--1110.

\bibitem[\protect\citeauthoryear{Doukhan and Louhichi}{Doukhan and
  Louhichi}{1999}]{doukhan1999new}
Doukhan, P. and S.~Louhichi (1999).
\newblock A new weak dependence condition and applications to moment
  inequalities.
\newblock {\em Stochastic Processes and Their Applications\/}~{\em 84\/}(2),
  313--342.

\bibitem[\protect\citeauthoryear{Gao and Ding}{Gao and
  Ding}{2023}]{gao_ding2023_network}
Gao, M. and P.~Ding (2023).
\newblock Causal inference in network experiments: Regression‑based analysis
  and design‑based properties.
\newblock {\em arXiv preprint\/}.

\bibitem[\protect\citeauthoryear{Graham}{Graham}{2020}]{GRAHAM202023}
Graham, B.~S. (2020).
\newblock Dyadic regression.
\newblock In B.~Graham and Áureo {de Paula} (Eds.), {\em The Econometric
  Analysis of Network Data}, pp.\  23--40. Academic Press.

\bibitem[\protect\citeauthoryear{Graham, Niu, and Powell}{Graham
  et~al.}{2024}]{graham2024kernel}
Graham, B.~S., F.~Niu, and J.~L. Powell (2024).
\newblock Kernel density estimation for undirected dyadic data.
\newblock {\em Journal of Econometrics\/}~{\em 240\/}(2), 186--204.

\bibitem[\protect\citeauthoryear{Hoshino and Yanagi}{Hoshino and
  Yanagi}{2024}]{hoshino_yanagi2024_noncompliance}
Hoshino, T. and T.~Yanagi (2024).
\newblock Causal inference with noncompliance and unknown interference.
\newblock {\em Journal of the American Statistical Association\/}~{\em
  119\/}(548), 2869--2880.

\bibitem[\protect\citeauthoryear{Jenish and Prucha}{Jenish and
  Prucha}{2012}]{jenish2012spatial}
Jenish, N. and I.~R. Prucha (2012).
\newblock On spatial processes and asymptotic inference under near-epoch
  dependence.
\newblock {\em Journal of econometrics\/}~{\em 170\/}(1), 178--190.

\bibitem[\protect\citeauthoryear{Kojevnikov, Marmer, and Song}{Kojevnikov
  et~al.}{2021}]{KMS2021}
Kojevnikov, D., V.~Marmer, and K.~Song (2021).
\newblock Limit theorems for network dependent random variables.
\newblock {\em Journal of Econometrics\/}~{\em 222\/}(2), 882--908.

\bibitem[\protect\citeauthoryear{Kuersteiner}{Kuersteiner}{2019}]{kuersteiner2019limit}
Kuersteiner, G.~M. (2019).
\newblock Limit theorems for data with network structure.
\newblock {\em arXiv preprint arXiv:1908.02375\/}.

\bibitem[\protect\citeauthoryear{Kuersteiner and Prucha}{Kuersteiner and
  Prucha}{2020}]{kuersteiner2020dynamic}
Kuersteiner, G.~M. and I.~R. Prucha (2020).
\newblock Dynamic spatial panel models: Networks, common shocks, and sequential
  exogeneity.
\newblock {\em Econometrica\/}~{\em 88\/}(5), 2109--2146.

\bibitem[\protect\citeauthoryear{Leung}{Leung}{2022}]{leung2022_dependence}
Leung, M.~P. (2022).
\newblock Dependence‐robust inference using resampled statistics.
\newblock {\em Journal of Applied Econometrics\/}~{\em 37\/}(2), 270--285.

\bibitem[\protect\citeauthoryear{Leung and Moon}{Leung and
  Moon}{2019}]{leung2019normal}
Leung, M.~P. and H.~R. Moon (2019).
\newblock Normal approximation in large network models.
\newblock {\em arXiv preprint arXiv:1904.11060\/}.

\bibitem[\protect\citeauthoryear{Menzel}{Menzel}{2021}]{menzel2021bootstrap}
Menzel, K. (2021).
\newblock Bootstrap with clustering in two or more dimensions.
\newblock {\em Econometrica\/}~{\em 89\/}(4), 1793--1824.

\bibitem[\protect\citeauthoryear{Newey and McFadden}{Newey and
  McFadden}{1994}]{NeweyMcFadden1994}
Newey, W. and D.~McFadden (1994).
\newblock Large sample estimation and hypothesis testing.
\newblock In R.~Engle and D.~McFadden (Eds.), {\em Handbook of Econometrics},
  pp.\  2111--2245. Elsevier.

\end{thebibliography}

\appendix
\section*{Appendix}

The appendix  collects proofs of the theoretical results presented in the main text.
Specifically, Appendix \ref{sec:theorem:main}, \ref{sec:cor:m}, \ref{sec:cor:m:normal}, \ref{sec:cor:gmm}, and \ref{sec:cor:gmm:normal} present proofs of Theorem \ref{theorem:main}, Corollary \ref{cor:m}, Corollary \ref{cor:m:normal}, Corollary \ref{cor:gmm}, and Corollary \ref{cor:gmm:normal} respectively.

\section{Proofs}\label{sec:proofs}

\subsection{Proof of Theorem \ref{theorem:main}}\label{sec:theorem:main}
\begin{proof}
By Assumption \ref{a:compact}, for any $\delta>0$ there exists a finite $\delta$--net $\{\theta_1,\theta_2,\ldots,\theta_J\}\subset \Theta$ such that for every $\theta\in\Theta$, there exists some $\theta_j \in \{\theta_1,\theta_2,\ldots,\theta_J\}$ with
\[
\|\theta-\theta_j\|\le\delta.
\]
For an arbitrary $\theta\in\Theta$, let $\theta_j(\theta) \in \{\theta_1,\theta_2,\ldots,\theta_J\}$ be a net point with $\|\theta-\theta_j(\theta)\|\le \delta$. Decompose
\[
\frac{1}{n}\sum_{i\in N_n}\Bigl[f(Y_{n,i},\theta)-E\bigl(f(Y_{n,i},\theta)\mid \mathcal{C}_n\bigr)\Bigr]
= A_n(\theta) + B_n(\theta) + C_n(\theta),
\]
where the three components on the right-hand side are:
\begin{align*}
A_n(\theta)=&\frac{1}{n}\sum_{i\in N_n}\Bigl[f(Y_{n,i},\theta)-f(Y_{n,i},\theta_j(\theta))\Bigr],
\\
B_n(\theta)=&\frac{1}{n}\sum_{i\in N_n}\Bigl[f(Y_{n,i},\theta_j(\theta))-E\bigl(f(Y_{n,i},\theta_j(\theta))\mid \mathcal{C}_n\bigr)\Bigr],
\qquad\text{and}
\\
C_n(\theta)=&\frac{1}{n}\sum_{i\in N_n}E\Bigl[f(Y_{n,i},\theta_j(\theta))-f(Y_{n,i},\theta)\mid \mathcal{C}_n\Bigr].
\end{align*}

By Assumption \ref{a:equi}, for every $n,i$, we have
\[
|f(Y_{n,i},\theta)-f(Y_{n,i},\theta_j(\theta))| \le \overline L\,\|\theta-\theta_j(\theta)\| \le \overline L\,\delta.
\]
Taking sample mean gives
\[
|A_n(\theta)| \le
\frac{1}{n}\sum_{i \in N_n} |f(Y_{n,i},\theta)-f(Y_{n,i},\theta_j(\theta))|
\le \overline L\,\delta.
\]
Similarly,
\[
|C_n(\theta)| \le
\frac{1}{n}\sum_{i \in N_n} E\Bigl[ |f(Y_{n,i},\theta_j(\theta))-f(Y_{n,i},\theta)| \, \mid \mathcal{C}_n\Bigr]
\le \overline L\,\delta.
\]
Thus, it follows that
\[
|A_n(\theta) + C_n(\theta)| \le 2\overline L\,\delta.
\]

Applying Proposition~3.1 of \citet*{KMS2021} under my Assumptions \ref{a:kms21}, \ref{a:kms32}, and \ref{a:function} for each fixed $\theta_j \in \{\theta_1,\theta_2,\ldots,\theta_J\}$ gives
\[
\left\| \frac{1}{n}\sum_{i\in N_n}\Bigl[f(Y_{n,i},\theta_j)-E\bigl(f(Y_{n,i},\theta_j)\mid \mathcal{C}_n\bigr)\Bigr] \right\|_{\mathcal{C}_n,2} \to 0 \quad \text{a.s.}
\]
Since the $\delta$--net $\{\theta_1,\ldots,\theta_J\}$ is finite, it follows that
\begin{equation}\label{eq:mn_zero}
S_n := \max_{1\le j\le J}\left\|\frac{1}{n}\sum_{i\in N_n}\Bigl[f(Y_{n,i},\theta_j)-E\bigl(f(Y_{n,i},\theta_j)\mid \mathcal{C}_n\bigr)\Bigr]\right\|_{\mathcal{C}_n,2} \to 0 \quad \text{a.s.}
\end{equation}

For any $\theta\in\Theta$,
\[
\left|\frac{1}{n}\sum_{i\in N_n}\Bigl[f(Y_{n,i},\theta)-E\bigl(f(Y_{n,i},\theta)\mid \mathcal{C}_n\bigr)\Bigr]\right|
\le |B_n(\theta)| + |A_n(\theta)+C_n(\theta)|,
\]
and thus
\begin{align*}
&\sup_{\theta\in\Theta}\left|\frac{1}{n}\sum_{i\in N_n}\Bigl[f(Y_{n,i},\theta)-E\bigl(f(Y_{n,i},\theta)\mid \mathcal{C}_n\bigr)\Bigr]\right|
\\
\le& \max_{1\le j\le J}\left|\frac{1}{n}\sum_{i\in N_n}\Bigl[f(Y_{n,i},\theta_j)-E\bigl(f(Y_{n,i},\theta_j)\mid \mathcal{C}_n\bigr)\Bigr]\right| + 2\overline L\,\delta.
\end{align*}
Take $E[ \ \cdot \ |\mathcal{C}_n]$ on both sides to get
\begin{align*}
&E\left[\left. \sup_{\theta\in\Theta}\left|\frac{1}{n}\sum_{i\in N_n}\Bigl[f(Y_{n,i},\theta)-E\bigl(f(Y_{n,i},\theta)\mid \mathcal{C}_n\bigr)\Bigr]\right| \ \right\vert \mathcal{C}_n\right]
\\
\le& E\left[\left. \max_{1\le j\le J}\left|\frac{1}{n}\sum_{i\in N_n}\Bigl[f(Y_{n,i},\theta_j)-E\bigl(f(Y_{n,i},\theta_j)\mid \mathcal{C}_n\bigr)\Bigr]\right| \ \right\vert \mathcal{C}_n\right] + 2\overline L\,\delta
\le J S_n + 2\overline L\,\delta,
\end{align*}
where the second inequality uses the Cauchy–Schwarz inequality.

Since for every fixed $\delta>0$ the maximum $S_n$ over the $\delta$-net converges to 0 almost surely by \eqref{eq:mn_zero}, it follows that
\[
E\left[\left. \sup_{\theta\in\Theta}\left|\frac{1}{n}\sum_{i\in N_n}\Bigl[f(Y_{n,i},\theta)-E\bigl(f(Y_{n,i},\theta)\mid \mathcal{C}_n\bigr)\Bigr]\right| \ \right\vert \mathcal{C}_n\right]
\le 2\overline L\,\delta + o(1)
\quad\text{a.s.}
\]
Because $\delta>0$ is arbitrary, this shows that
\[
E\left[\left. \sup_{\theta\in\Theta}\left|\frac{1}{n}\sum_{i\in N_n}\Bigl[f(Y_{n,i},\theta)-E\bigl(f(Y_{n,i},\theta)\mid \mathcal{C}_n\bigr)\Bigr]\right| \ \right\vert \mathcal{C}_n \right] \to 0 \quad \text{a.s.}
\]
as claimed in the statement of the theorem.
\end{proof}

\subsection{Proof of Corollary \ref{cor:m}}\label{sec:cor:m}
\begin{proof}
I am going to check the four conditions of \citet[NM,][Theorem 2.1]{NeweyMcFadden1994}.
Assumption \ref{ass:m:identification} implies condition (i) in NM.
Assumption \ref{a:compact} implies condition (ii) in NM.
Condition (iii) in NM is directly assumed in Assumption \ref{ass:m:identification}.
Theorem \ref{theorem:main} under Assumptions \ref{a:kms21}--\ref{a:equi} implies condition (iv) in NM.
Therefore, the claim of the corollary follows by Theorem 2.1 of NM.
\end{proof}

\subsection{Proof of Corollary \ref{cor:m:normal}}\label{sec:cor:m:normal}

\begin{proof}
I am going to check the conditions of \citet[NM,][Theorem 3.1]{NeweyMcFadden1994}.
Corollary \ref{cor:m} under Assumptions \ref{a:kms21}--\ref{ass:m:identification} implies the consistency
$\hat\theta_{M} \stackrel{p}{\rightarrow} \theta_0$ required by Theorem 3.1 in NM.
Condition (i) in NM is directly stated in Assumption \ref{ass:m:identification}.
Theorem 3.2 of KMS under my Assumptions \ref{a:kms21} and \ref{a:m:normal}, together with the Wold device, yields
$
\sqrt{n} \nabla_{\theta} Q_n(\theta_0) \stackrel{d}{\rightarrow} N(0,\Sigma).
$
This shows that condition (iii) in NM is satisfied.
Note that, by the equicontinuity in Assumption \ref{a:m:hessian} (i) and the $L^1$ dominance in Assumption \ref{a:m:hessian} (ii), the dominated convergence theorem yields $\nabla_{\theta\theta} Q(\theta) = E\bigl(\nabla_{\theta\theta} f(Y_{n,i},\theta)\bigr)$, and this derivative function is guaranteed to be a continuous function of $\theta$.
Therefore, Assumptions \ref{a:kms21}, \ref{a:kms32}, \ref{a:compact}, and \ref{a:m:hessian} (i)--(ii) imply
$
\sup_{\theta \in \Theta} \left\vert \nabla_{\theta\theta} Q_n(\theta) - \nabla_{\theta\theta} Q(\theta) \right\vert \rightarrow 0
$ 
a.s.
by Theorem \ref{theorem:main}.
This shows that condition (iv) in NM is satisfied.
Finally, condition (v) of NM is directly stated in Assumption \ref{a:m:hessian} (iii).
Thus, the proof of the corollary is complete.
\end{proof}

\subsection{Proof of Corollary \ref{cor:gmm}}\label{sec:cor:gmm}
\begin{proof}
I am going to check the four conditions of \citet[NM,][Theorem 2.1]{NeweyMcFadden1994}.
Assumption \ref{ass:gmm:identification} and the positive definiteness of $W$ imply condition (i) in NM.
Assumption \ref{a:compact} implies condition (ii) in NM.
Condition (iii) in NM is directly assumed in Assumption \ref{ass:gmm:identification}.
Since $W_n \stackrel{p}{\rightarrow} W$ where $W$ is positive definite, Theorem \ref{theorem:main} under Assumptions \ref{a:kms21}--\ref{a:equi} implies condition (iv) in NM.
Therefore, the claim of the corollary follows by Theorem 2.1 of NM.
\end{proof}

\subsection{Proof of Corollary \ref{cor:gmm:normal}}\label{sec:cor:gmm:normal}

\begin{proof}
I am going to check the conditions of \citet[NM,][Theorem 3.2]{NeweyMcFadden1994}.
Corollary \ref{cor:gmm} under Assumptions \ref{a:kms21}--\ref{a:equi}, and \ref{ass:gmm:identification} implies the consistency
$\hat\theta_{M} \stackrel{p}{\rightarrow} \theta_0$ required by Theorem 3.2 in NM.
Condition (i) in NM is directly stated in Assumption \ref{ass:gmm:identification}.
Theorem 3.2 of KMS under my Assumptions \ref{a:kms21} and \ref{a:gmm:normal}, together with the Wold device, yields
$
\sqrt{n} \bar f_n(\theta_0) \stackrel{d}{\rightarrow} N(0,\Omega).
$
This shows that condition (iii) in NM is satisfied.
Note that, under the equicontinuity in Assumption \ref{a:gmm:hessian} (i) and the $L^1$ dominance in Assumption \ref{a:gmm:hessian} (ii), the dominated convergence theorem yields $D_{\theta}E\bigl(f(Y_{n,i},\theta)\bigr) = E\bigl(D_{\theta} f(Y_{n,i},\theta)\bigr)$, and this derivative function is guaranteed to be a continuous function of $\theta$.
Therefore, Assumptions \ref{a:kms21}, \ref{a:kms32}, \ref{a:compact}, and \ref{a:gmm:hessian} (i)--(ii) imply
$
\sup_{\theta \in \Theta} \left\vert D_{\theta} \bar f_n(\theta) - G(\theta) \right\vert \rightarrow 0
$ 
a.s.
by Theorem \ref{theorem:main}, where $G(\theta) := D_{\theta} E\left(f(Y_{n,i},\theta)\right)$.
This shows that condition (iv) in NM is satisfied.
Finally, condition (v) of NM is directly stated in Assumption \ref{a:gmm:hessian} (iii).
Thus, the proof of the corollary is complete.
\end{proof}

\end{document}